\newtheorem{theorem}{Theorem}
\newtheorem{lemma}{Lemma}
\DeclareMathOperator{\Tr}{Tr}
\title{Sensitivity Bounds for Quantum Control and Time-Domain Performance Guarantees}
\author{S.\,P.\ O'Neil$^{1,*}$, \and
E.\,A.\ Jonckheere$^1$, \and S.\ Schirmer$^2$
\thanks{$^1$ Dept of Electrical \& Computer Engineering, University of Southern California, CA, USA. 
{\tt jonckhee@usc.edu, seanonei@usc.edu}}
\thanks{$^2$ Faculty of Science \& Engineering, Physics, Swansea University, UK. {\tt s.m.shermer@gmail.com}}
\thanks{This work was supported by Supercomputing Wales project "Robust Control Design for Quantum Technology".}
}
\begin{document}
\maketitle
\begin{abstract}
Control of quantum systems via time-varying external fields optimized to maximize a fidelity measure at a given time is a mainstay in modern quantum control. However, save for specific systems, current analysis techniques for such quantum controllers provide no analytical robustness guarantees. In this letter we provide analytical bounds on the differential sensitivity of the gate fidelity error to structured uncertainties for a closed quantum system controlled by piecewise-constant, optimal control fields.  We additionally determine those uncertainty structures that result in this worst-case maximal sensitivity.  We then use these differential sensitivity bounds to provide conditions that guarantee performance, quantified by the fidelity error, in the face of parameter uncertainty.  
\end{abstract}

%\begin{IEEEkeywords}
%robust control, quantum systems, differential sensitivity, performance guarantees    
%\end{IEEEkeywords}

\section{Introduction}
Control based on the optimal tuning of time-varying external fields forms the backbone of effective control for current quantum technology~\cite{Koch_2022}.  Such open-loop techniques have been successfully applied in areas from the implementation of quantum gates based on superconducting qubits~\cite{Werninghaus_2021}, to quantum metrology~\cite{Hou_2021}, and temperature and magnetic field sensing via large defect ensembles of Nitrogen-Vacancy Centers in diamond~\cite{Propson_2022} among many others.  Despite these successes of quantum optimal control, robustness issues remain.  Even with the best modeling, uncertainty in the underlying Hamiltonian persists~\cite{Koch_2022}, generating the potential for under-performance of an optimized controller as the plant diverges from the model system. 

Robust control \emph{design} techniques provide options to improve performance in the face of uncertainty. However, despite the success of robust control methods since the early 1980s, no comparable coherent theory for robust quantum control exists. One of the triumphs of classical robust control is the provision of analytic bounds on uncertainty that guarantee performance within specified bounds, best exemplified by the structured singular value $\mu$~\cite{robust_and_optimal_control}. With few exceptions, such as the class of optical systems amenable to $H_\infty$ methods in~\cite{wang_2023} or single-qubit gates with an uncertain interaction Hamiltonian as in~\cite{Kosut_2013}, analytic bounds to guarantee robust performance are not considered. Rather, the current state of the art in quantum robust control is largely a process of building robustness into synthesis via an optimization penalty based on the uncertainty model, followed by verification of robust performance bounds through post-design Monte Carlo simulation~\cite{Ge_2019,Koswara_2021,Koswara_2021_NJP,Daems_2013,Van_Damme_2017,RIM2022}. 

This touches on the need for a formal theory of robust quantum control. The survey~\cite{Petersen2019} highlights robustness issues with open-loop robust control of quantum systems and the nascent stage of development of robustness considerations in quantum control. In~\cite{Horowitz_1975}, Horowitz makes the point that a good theory of sensitivity should guarantee tolerances on performance over a range of parameter uncertainty, while in~\cite{Safonov_1981} Safonov outlines the performance and robustness trade-offs inherent in the application of frequency-domain robust control. This letter is a contribution in the spirit of such a theory for quantum control. We seek to provide reliable, analytic bounds on the differential sensitivity of the fidelity error to uncertainty in the drift and interaction Hamiltonians for fidelity-optimized controllers.  We then use these sensitivity bounds to provide guarantees on the gate fidelity in the face of such uncertainty. Further, since coherent quantum systems must remain oscillatory, not converging to a classical steady state and hence marginally stable, to retain their quantum properties, time-domain techniques are the natural choice to approach this analysis task at the edge of stability.  This is the approach taken here, departing from the established frequency-domain techniques of classical robust control~\cite{robust_and_optimal_control}.

The nominal system model and performance metrics are defined in Section~\ref{sec: prelims}, followed by the uncertainty model in Section~\ref{sec: uncertainty_model}.  The differential sensitivity of the controlled system for a given uncertainty model is derived in Section~\ref{sec: differential_sensitivity}, followed by derivation of bounds on the differential sensitivity and computation of the uncertainty structures that maximize the sensitivity in Section~\ref{sec: sensitivity_bounds}.  In Section~\ref{sec: performance} we leverage these bounds to provide performance guarantees in terms of the structure and size of the uncertainty.  The results are illustrated for a gate fidelity optimization problem in Section~\ref{sec: example}.

\section{Preliminaries}\label{sec: prelims}
Consider a closed quantum system of $Q$ qubits with underlying Hilbert space of dimension $N = 2^Q$. The state is characterized by the (pure) quantum state $\psi(t) \in \mathbb{C}^N$ with nominal drift Hamiltonian $H_0 \in \mathbb{C}^{N \times N}$. To control the evolution of $\psi(t)$, introduce $M$ control fields that optimally steer the trajectory from an initial state $\psi_0$ to a final state $\psi_f$ at a readout time $t_f$. The control fields take on constant values at $\kappa$ uniform time intervals of length $\Delta$ so that $t_k=k \Delta$ with $t_0=0$ and $t_f=\kappa \Delta$.  While the assumption of piecewise-constant controls may appear restrictive, such optimal pulse sequences are standard for optimal control problems~\cite{Koch_2022,KHANEJA_2005}.  Each control pulse $f_{m}^{(k)} \in \mathbb{R}$ enters the dynamics through an interaction Hamiltonian $H_m$ so that in the $k$th interval
\begin{equation}
    H^{(k)} = H_0 + \sum\limits_{m = 1}^M H_m f_{m}^{(k)},
    \quad 1 \leq k \leq \kappa.
\end{equation} 
Here $f_{m}^{(k)}$ is the strength of the $m$th control field in the time interval $t_{k-1}\leq t < t_k$. Let $u_k$ be a pulse of unit magnitude that is only non-zero for the interval $t_{k-1} \leq t \leq t_k$ (if $u(t)$ is the standard unit step then $u_k = u(t_{k-1}) - u(t_k)$). Then the total Hamiltonian is the sum of $H^{(k)}$ over time 
\begin{equation}
    H(t) = \sum_{k=1}^\kappa H^{(k)} u_k 
         = H_0 + \sum_{k=1}^\kappa\sum_{m=1}^M H_m f_{m}^{(k)} u_k. 
\end{equation}
The pure quantum state dynamics are governed by 
\begin{equation}
    \dot{\psi}(t) = -\frac{i}{\hbar} H(t)\psi(t) \ , \ \psi(0) = \psi_0
\end{equation}
with solution at the gate operation time $t_f$ 
\begin{align}
\begin{split}
    \psi(t_f) 
    &= \Phi^{(\kappa,0)}\psi_0 = \prod_{k=1}^\kappa \Phi^{(k,k-1)}\psi_0 = \\
    &\Phi^{(\kappa,\kappa-1)}\Phi^{(\kappa-1,\kappa-2)} \hdots \Phi^{(1,0)} \psi_0
\end{split}
\end{align}
where $\prod_{k=1}^{\kappa}$ indicates an ordered product.  Choosing units such that $\hbar=1$, the solution to the Schr\"odinger equation on the interval $[t_{k-1},t_k]$ is
\begin{equation}\label{eq: nominal_phi_matrix}
  \Phi^{(k,k-1)}  = \exp\left[-iH^{(k)}(t_k - t_{k-1})\right],
\end{equation}
and $\Phi^{(\kappa,0)}$ is the concatenation of the state transition matrices from $k=1$ to $\kappa$ covering the evolution from $t=0$ to $t = t_f$.

Equivalently, we can optimize for a unitary operator $U(t_f) = \Phi^{(\kappa,0)} \in \mathbb{U}(N)$ with target $U_f \in \mathbb{U}(N)$ where $\psi(t_f) = U(t_f)\psi_0$.  In this case the dynamics are governed by $\dot{U}(t) = -i H(t)U_0$ with $U_0 = I$, and $I$ is the identity on $\mathbb{C}^{N}$.
The solution is 
\begin{equation}\label{eq: schrodinger_gate}
    U(t_f) = \Phi^{(\kappa,0)} = \prod_{k=1}^{\kappa}\Phi^{(k,k-1)}. 
\end{equation}
The figure of merit is the normalized gate fidelity at $t_f$,
\begin{equation}
    \mathcal{F}(t_f) = \frac{1}{N}\left| \Tr \left[ U_f^{\dagger}\Phi^{(\kappa,0)} \right] \right|.
\end{equation}
The corresponding fidelity error is $e(t_f) = 1 - \mathcal{F}(t_f)$. 

\section{Uncertainty Model}\label{sec: uncertainty_model}

Consider an uncertain parameter in the nominal Hamiltonian modeled as $\delta \hat{H}_\mu$ where $\delta \in [\delta_1,\delta_2]$ is the scalar deviation of the uncertain parameter from its nominal value, structured as $\hat{H}_{\mu} \in \mathbb{C}^{N \times N}$ and normalized such that $\left\| \hat{H}_\mu \right\|_{F} = 1$. $\hat{H}_\mu$ is Hermitian as the Hamiltonian of a closed system is constrained to be Hermitian under the uncertainty. The uncertain Hamiltonian is then
\begin{subequations}\label{eq: perturbed_hamiltonian}
\begin{align}
    \tilde{H}(t) &= \sum_{k=1}^\kappa \tilde{H}^{(k)} u_k, \\ 
    \tilde{H}^{(k)}  &= H_0 + \sum_{m=1}^{M} H_m f_{m}^{(k)} + \delta \hat{H}_\mu\alpha_{\mu}^{(k)}.
    \end{align}
\end{subequations}
We consider the following uncertainty cases: 
\begin{itemize}
    \item Internal uncertainty in $H_0$: In this case $\tilde{H}_0=H_0 + \delta \hat{H}_0$ so that $\hat{H}_\mu = \hat{H}_0$, $\alpha_{\mu}^{(k)} = 1$ for all $k$, and $\hat{H}_0 = H_0/\left\| H_0 \right\|_{F}$ is the normalized structure matrix for $H_0$.  
    \item External uncertainty in interaction Hamiltonian $H_m$: In this case $f_{m}^{(k)} \tilde{H}_m = f_{m}^{(k)}(H_m + \delta \hat{H}_m)$ so that the final term in~\eqref{eq: perturbed_hamiltonian} is $\delta f_{m}^{(k)} \hat{H}_m$ and $\alpha_{\mu}^{(k)} = \alpha_{m}^{(k)} = f_{m}^{(k)}$, where
    $\hat{H}_\mu = \hat{H}_m$ is the normalized structure matrix for $H_m$. 
\end{itemize}

The perturbed solution to Eq~\eqref{eq: schrodinger_gate} at $t_f$ is given by 
\begin{subequations}
\begin{align} 
  \tilde{U}(t_f) &= \tilde{\Phi}^{(\kappa,0)} = \prod_{k=1}^{\kappa}\tilde{\Phi}^{(k,k-1)}, 
  \label{eq: perturbed_phi}\\
  \tilde{\Phi}^{(k,k-1)} &= \exp\left[-i\tilde{H}^{(k)} (t_k -t_{k-1}) \right].
  \label{eq: perturbed_Phi2}
\end{align}
\end{subequations}
The perturbed fidelity error due to the uncertainty $\hat{H}_\mu$ is 
\begin{equation}\label{eq: perturbed_error}
    \tilde{e}_{\mu}(t_f) 
    = 1 - \tilde{\mathcal{F}_\mu}(t_f) 
    = 1 - \frac{1}{N} \left| \Tr\left[U_f^\dagger \tilde{\Phi}^{(\kappa,0)} \right] \right|.
\end{equation}

\section{Differential Sensitivity}\label{sec: differential_sensitivity}
Taking the derivative of $\tilde{e}_\mu(t_f)$ with respect to the uncertain parameter $\delta$ structured as $\hat{H}_\mu$ yields the following from Eq.~(11) of~\cite{Floether_2012} and Eq.~(28) of~\cite{Schirmer_2011} when evaluated at $\delta = 0$
\begin{equation}\label{eq: sensitivity}
\left.\frac{\partial{\tilde{e}_{\mu}(t_f)}}{\partial \delta} \right|_{\delta=0} 
= - \frac{  \Re \left\{\Tr\left[U_f (\Phi^{(\kappa,0)})^\dagger\right]
\Tr\left[ U_f^\dagger \sum_{k=1}^{\kappa} \Lambda^{(\kappa,k)} \right] \right\}
}{ N \left| \Tr\left[U_f^\dagger \Phi^{(\kappa,0)} \right] \right|} . 
\end{equation}
Here $\Lambda^{(\kappa,k)}$ is defined as 
\begin{equation}
\Lambda^{(\kappa,k)} = \Phi^{(\kappa,k)} \frac{\partial \tilde{\Phi}^{(k,k-1)}}{\partial \delta}\Phi^{(k-1,0)},
\end{equation}
$\frac{\partial}{\partial \delta} \tilde{\Phi}^{(k,k-1)}$ is given by Eq.~(28) of~\cite{Schirmer_2011}
\begin{multline}\label{eq: dphi_tilde}
  \int_{t_{k-1}}^{t_{k}} e^{-iH^{(k)}(t_k-\tau)} \left(-i \hat{H}_\mu\alpha_{\mu}^{(k)}  \right) e^{-iH^{(k)} (\tau-t_{k-1})}d \tau 
\end{multline}
and $\alpha_{\mu}^{(k)} \in \set{ 1,f_{m}^{(k)} }$ based on the type of uncertainty as detailed in Section~\ref{sec: uncertainty_model}.  We equivalently write Eq.~\eqref{eq: sensitivity} as
\begin{equation}\label{eq: reduced_sensitivity_matrix}
 \left.\frac{\partial{\tilde{e}_{\mu}(t_f)}}{\partial \delta} \right|_{\delta = 0}  
 = \Re \left\{ -\frac{e^{-i\phi}}{N}  \Tr\left[ U_f^\dagger \sum_{k=1}^{\kappa} \Lambda^{(\kappa,k)} \right] \right\},
\end{equation}
where $\phi = \angle \Tr\left[ U_f^{\dagger} \Phi^{(\kappa,0)} \right]$.  For brevity in what follows, we define $\frac{\partial}{\partial\delta}\tilde{e}_{\mu}(t_f) = \zeta_{\mu}(t_f)$ as the derivative of fidelity error in the direction $\hat{H}_{\mu}$ evaluated at $\delta = 0$. 

\section{Differential Sensitivity Bounds}\label{sec: sensitivity_bounds}
We now consider bounds on the size of the differential sensitivity. To provide an initial bound on the differential sensitivity, we directly bound the absolute value of Eq.~\eqref{eq: reduced_sensitivity_matrix}:
\begin{align}
    \left| \zeta_{\mu}(t_f) \right| 
    &\leq \left| \frac{e^{-i \phi}}{N} \right| \cdot \left| \Tr \left[ U_f^{\dagger} \sum_{k = 1}^{\kappa} \Lambda^{(\kappa,k)} \right] \right| \nonumber\\ 
    &\leq \frac{1}{N} \sum_{k=1}^{\kappa} \left| \Tr \left[ U_f^{\dagger} \Lambda^{(\kappa,k)} \right] \right|. 
\end{align} 
Invoking the von Neumann trace inequality~\cite{Mirsky_1975}, and noting that all singular values $\sigma_\ell(U_f^\dagger)$ of $U_f^\dagger$ are $1$, gives the bound 
\begin{equation}\label{eq: bound_B_1}
    \left| \zeta_{\mu}(t_f) \right| \leq \frac{1}{N} \sum_{k=1}^{\kappa} \sum_{\ell=1}^{N} \sigma_\ell \left( \Lambda^{(\kappa,k)} \right).
\end{equation} 
As all $\Phi^{(\kappa,k)}$ are unitary have
\begin{multline}
{\sigma}_{\ell}(\Lambda^{(\kappa,k)}) 
= \sigma_\ell \left( \Phi^{(\kappa,k)} \frac{\partial \tilde{\Phi}^{(k,k-1)}}{\partial \delta} \Phi^{(k-1,0)} \right)
\\ 
= \left\| \frac{\partial \tilde{\Phi}^{(k,k-1)}}{\partial \delta} \right\|_2 \leq \int_{t_{k-1}}^{t_k} \!\!\left\| e^{-iH^{(k)}(t_k-\tau)}  \right\|_2 
\times \left\| \alpha_{\mu}^{(k)} \hat{H}_\mu \right\|_2 \\
\times \left\| e^{-iH^{(k)} (\tau-t_{k-1})} \right\|_2 d \tau 
= \Delta \left| \alpha_{\mu}^{(k)} \right| \left\| \hat{H}_\mu \right\|_2,
\end{multline}
which yields the bound
\begin{equation}\label{eq: bound_B_2}
\left\| \zeta_{\mu}(t_f) \right\| 
\leq \Delta \left\| \hat{H}_\mu \right\|_2  \; \sum_{k=1}^{\kappa}\left| \alpha_{\mu}^{(k)} \right| =: B_1.
\end{equation}
This bound is conservative and only relevant to a specific perturbation structure $\hat{H}_\mu$. However, relying only on the system parameters, the bound is not constrained to a neighborhood of $\delta = 0$ and is independent of the size of the uncertainty $\delta$, making it applicable to worst-case performance analysis.  

To obtain tighter bounds in the perturbative regime about $\delta = 0$ that are applicable to more general uncertainty structures, we unpack the structure in~\eqref{eq: reduced_sensitivity_matrix}.  Employing the cyclic property and linearity of the trace, rewrite the last term of~\eqref{eq: reduced_sensitivity_matrix} as 
 \begin{align*}
     \Tr & \left[ U_f^{\dagger}\sum_{k=1}^\kappa \Lambda^{(\kappa,k)} \right] 
    = \sum_{k=1}^{\kappa} \Tr\left[\Phi^{(k-1,0)}U_f^{\dagger}\Phi^{(\kappa,k)} \frac{\partial \tilde{\Phi}^{(k, k-1)}}{\partial \delta} \right]
 \end{align*}
where $\hat{H}_\mu$ is restricted to a subset of the $N\times N$ Hermitian matrices, which is justified as the $\hat{H}_\mu$ are necessarily Hermitian and constrained to the drift and interaction Hamiltonian matrix structures by the assumptions of Section~\ref{sec: uncertainty_model}.  Define a basis for this subset as $\{\hat{H}_m \}_{m = 0}^{M}$ for $M < N^2$.  An arbitrary, normalized uncertainty structure is represented as 
\begin{equation}
\hat{H}_\mu = \sum_{m=0}^{M} s_m \hat{H}_m.
\end{equation}
If $\mathbf{s_\mu} \in \mathbb{R}^{M+1}$ is a column vector of the scalars $s_m$ and $\left\| \hat{H}_m \right\|_{F} = 1$ for each $m$, then retaining normalization requires $\left\| \hat{H}_\mu \right\|_{F} = \sqrt{\Tr\left[\hat{H}_\mu^{\dagger} \hat{H}_\mu \right]} = 1$. This holds if $\left\| \mathbf{s_\mu} \right\|_{2} =1$.  Substituting this expansion for $\hat{H}_{\mu}$ in~\eqref{eq: dphi_tilde} yields 
\begin{equation}\label{eq: dphi_tilde2}
  \frac{\partial \tilde{\Phi}^{(k,k-1)}}{\partial \delta} 
  = \sum_{m = 0}^M \mathbf{X}^{(k)}_m s_m 
\end{equation}
where $\mathbf{X}^{(k)}_m$ is given by
\begin{equation}\label{eq: X_ell}
  -i
  \int_{t_{k-1}}^{t_{k}} \!\! e^{-iH^{(k)}(t_k-\tau)} \hat{H}_m \alpha_{m}^{(k)} e^{-iH^{(k)} (\tau-t_{k-1})}d \tau. 
\end{equation}
Defining 
\begin{equation}\label{eq: Z_kl}
\Re \left\{ -\frac{e^{-i\phi}}{N} \Tr\left[\Phi^{(k-1,0)}U_f^\dagger \Phi^{(\kappa,k)} \mathbf{X}^{(k)}_m \right] \right\} =: Z^{(k)}_m 
\end{equation}
the differential sensitivity has the compact expression
\begin{equation}\label{eq: Z_Gamma}
\zeta_\mu(t_f) = \mathbf{1}^{T} \mathbf{Z}(t_f,\mathbf{f}) \mathbf{s_\mu} = \mathbf{\Gamma}(t_f,\mathbf{f})\mathbf{s_\mu},
\end{equation}
where $\mathbf{1} \in \mathbb{R}^{\kappa \times 1}$ is the vector of $\kappa$ $1$s that sum the vectorized components of $\frac{\partial}{\partial \delta} \tilde{\Phi}^{(k,k-1)}.$ The $t_f$ and $\mathbf{f}$ in $\mathbf{Z}$ and $\mathbf{\Gamma}$ indicate the dependence of these matrices on the gate operation time and control fields. 

Before deriving the improved upper bounds on $\zeta_\mu(t_f)$, note that \eqref{eq: Z_Gamma} facilitates two interpretations of the differential sensitivity based on the constraints of the uncertainty. If the uncertainty structure is assumed constant so that the Hamiltonian does not vary with time, $\mathbf{s}_\mu$ is a constant vector so that $\mathbf{\Gamma}(t_f,\mathbf{f})(\cdot)$ can be viewed as a function that accepts as input a single uncertainty structure, $\mathbf{s_\mu}$ and provides as output, the sensitivity in that direction. In this case, we provide the following upper bound on the differential sensitivity at $\delta = 0$:

\begin{theorem}\label{thm: theorem_1}
The maximum of $\left| \zeta_{\mu}(t_f) \right|$ in~\eqref{eq: Z_Gamma} with constant $\mathbf{s}_\mu$ is 
$B_2:=\left\| \mathbf{\Gamma} \right\|_{2}$. Further, the uncertainty that maximizes $\left| \zeta_{\mu}(t_f) \right|$ is given by $\bar{H}_\mu = \sum_{m = 0}^M v_m \hat{H}_m $ where $\{v_m\}$ are the components of the normalized vector $\hat{\mathbf{v}} = \mathbf{\Gamma}^T / B_2$. 
\end{theorem}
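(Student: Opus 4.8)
The plan is to recognize the maximization as a constrained linear optimization over the Euclidean unit sphere and to dispatch it with the Cauchy--Schwarz inequality. Everything needed has already been assembled: by~\eqref{eq: Z_Gamma} the sensitivity is the bilinear form $\zeta_\mu(t_f) = \mathbf{\Gamma}(t_f,\mathbf{f})\,\mathbf{s}_\mu$, and the normalization $\|\hat{H}_\mu\|_F = 1$ was shown to be equivalent to $\|\mathbf{s}_\mu\|_2 = 1$. Hence the quantity to be maximized is $|\mathbf{\Gamma}\,\mathbf{s}_\mu|$ over the unit sphere $\{\mathbf{s}_\mu \in \mathbb{R}^{M+1} : \|\mathbf{s}_\mu\|_2 = 1\}$.

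First I would establish that $\mathbf{\Gamma}$ is a single real row vector of length $M+1$. Each entry $Z^{(k)}_m$ in~\eqref{eq: Z_kl} is the real part of a complex trace, hence real, so $\mathbf{Z}$ is a real $\kappa \times (M+1)$ matrix and $\mathbf{\Gamma} = \mathbf{1}^T \mathbf{Z}$ is a real $1 \times (M+1)$ row vector. This matters because for a row vector the induced operator $2$-norm $\|\mathbf{\Gamma}\|_2$ collapses to the Euclidean length of $\mathbf{\Gamma}^T$, its unique nonzero singular value: indeed $\mathbf{\Gamma}\mathbf{\Gamma}^T = \|\mathbf{\Gamma}\|_2^2$ is a scalar.

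Next I would apply Cauchy--Schwarz: for any admissible $\mathbf{s}_\mu$, $|\mathbf{\Gamma}\,\mathbf{s}_\mu| \le \|\mathbf{\Gamma}^T\|_2\,\|\mathbf{s}_\mu\|_2 = \|\mathbf{\Gamma}\|_2 = B_2$, which establishes the upper bound. Equality holds precisely when $\mathbf{s}_\mu$ is parallel to $\mathbf{\Gamma}^T$, and the unit vector $\hat{\mathbf{v}} = \mathbf{\Gamma}^T / B_2$ realizes it, since $\mathbf{\Gamma}\hat{\mathbf{v}} = \mathbf{\Gamma}\mathbf{\Gamma}^T/B_2 = \|\mathbf{\Gamma}\|_2^2/B_2 = B_2$. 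Expanding $\hat{\mathbf{v}}$ back through the basis $\{\hat{H}_m\}_{m=0}^{M}$ then gives the maximizing structure $\bar{H}_\mu = \sum_{m=0}^M v_m \hat{H}_m$ with $v_m$ the components of $\hat{\mathbf{v}}$, and this is automatically normalized because $\|\hat{\mathbf{v}}\|_2 = 1$.

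There is little genuine obstacle here, as the heavy lifting was done in reducing $\zeta_\mu(t_f)$ to the form~\eqref{eq: Z_Gamma}; the only points requiring care are (i) confirming $\mathbf{\Gamma}$ is a real row vector so the operator norm reduces to a Euclidean norm, and (ii) handling the absolute value, since both $\pm\hat{\mathbf{v}}$ attain the maximum $B_2$. The stated choice $\hat{\mathbf{v}} = \mathbf{\Gamma}^T/B_2$ yields $\mathbf{\Gamma}\hat{\mathbf{v}} = +B_2 > 0$, so it indeed maximizes $|\zeta_\mu(t_f)|$ as claimed. Attainment of the maximum requires no separate compactness argument, as Cauchy--Schwarz exhibits the maximizer explicitly; the only excluded case is the degenerate $\mathbf{\Gamma} = 0$, for which the sensitivity vanishes identically and $\hat{\mathbf{v}}$ is immaterial.
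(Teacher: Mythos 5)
Your proof is correct and follows essentially the same route as the paper: both reduce the problem to maximizing the linear form $\mathbf{\Gamma}\mathbf{s}_\mu$ over the unit sphere via Cauchy--Schwarz (the paper phrases it as norm submultiplicativity, which for a row vector is the same inequality), identify $\hat{\mathbf{v}} = \mathbf{\Gamma}^T/B_2$ as the maximizer, and map it back through the basis $\{\hat{H}_m\}$. Your added remarks on the sign ambiguity and the degenerate case $\mathbf{\Gamma}=0$ are minor refinements the paper omits but do not constitute a different approach.
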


\begin{proof}
Following directly from~\eqref{eq: Z_Gamma} and for a normalized uncertainty structure $\left\| \mathbf{s_\mu} \right\|_2 = 1$, $\left| \zeta_{\mu}(t_f) \right| = \left\| \mathbf{\Gamma} \mathbf{s_\mu} \right\|_{2} \leq \left\| \mathbf{\Gamma} \right\|_{2} \cdot \left\| \mathbf{s_\mu} \right\|_2 = \left\| \mathbf{\Gamma} \right\|_2 = B_2$. Noting that $\mathbf{\Gamma}$ is simply a real $(M+1)$-dimensional row vector, the $\mathbf{s_\mu}$ that maximizes the inner product $\mathbf{\Gamma}\mathbf{s}_\mu$ with $\left\| \mathbf{s}_\mu \right\|_2 = 1$ is $\hat{\mathbf{v}} = \mathbf{\Gamma}^{T}/B_2$. The maximum uncertainty direction $\bar{H}_\mu$ in terms of the Hamiltonian uncertainty basis $\{\hat{H}_m\}$ directly follows.
\end{proof}

Alternatively, consider an uncertainty that is constant over each time step but varies over the evolution. Then for each time step $k$, $\hat{H}_\mu^{(k)} = \sum_{m=0}^{M} s_m^{(k)}\hat{H}_m$ with a corresponding $\mathbf{s}_\mu^{(k)}$ in the vectorized representation. The differential sensitivity is then
\begin{multline}
    \label{eq: Z_time_change}
    \zeta_{\set{\mu}}(t_f) 
    = \left.\frac{\partial \tilde{e}_{\set{\mu}}(t_f)}{\partial \delta} \right|_{\delta = 0} \\
    = \sum_{k=1}^{\kappa} \left (\sum_{m=0}^{M} Z^{(k)}_{m} s_m^{(k)} \right) 
    = \sum_{k=1}^{\kappa} \mathbf{Z}^{(k)} \mathbf{s}_{\mu}^{(k)} 
    = \sum_{k=1}^{\kappa} \varsigma^{(k)},
\end{multline}
where $\mathbf{Z}^{(k)}$ is the $k$th row of $\mathbf{Z}(t_f,\mathbf{f})$ in~\eqref{eq: Z_Gamma} and $\tilde{e}_{\set{\mu}}$ indicates that $\hat{H}_\mu$ is not fixed but given by the sequence $\set{\hat{H}_\mu^{(k)}}_{k=1}^{\kappa}$. This leads to the following alternative bound for the differential sensitivity. 

\begin{theorem}\label{thm: theorem_2}
    The maximum size of $\left| \zeta_{\set{\mu}}(t_f) \right|$ for the formulation of~\eqref{eq: Z_time_change} with uncertainty defined by the sequence $\set{\mathbf{s}_\mu^{(k)}}$ and $\left\| \mathbf{s}_{\mu}^{(k)} \right\| = 1$ for all $k$ is $\left\| \set{\bar{\varsigma}^{(k)}} \right\|_{\ell^1} =: B_3$ where $\bar{\varsigma}^{(k)} = \left\| \mathbf{Z}^{(k)} \right\|_{2}$. Further the sequence $\set{\bar{\mathbf{s}}_{\mu}^{(k)}}$ that achieves the bound $B_3$ is given by $\bar{\mathbf{s}}_{\mu}^{(k)} = {\mathbf{Z}^{(k)}}^T/\bar{\varsigma}^{(k)}$ for $1 \leq k \leq \kappa$. 
\end{theorem}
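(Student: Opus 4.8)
The plan is to exploit the fact that, in the time-varying model of~\eqref{eq: Z_time_change}, each unit vector $\mathbf{s}_\mu^{(k)}$ is constrained independently and appears in only the single summand $\varsigma^{(k)} = \mathbf{Z}^{(k)}\mathbf{s}_\mu^{(k)}$, so the optimization decouples across the $\kappa$ time steps. First I would establish the upper bound termwise. Each $\mathbf{Z}^{(k)}$ is a real row vector, since every entry $Z^{(k)}_m$ is the real part defined in~\eqref{eq: Z_kl}, and each $\mathbf{s}_\mu^{(k)}$ is a real unit vector; the Cauchy--Schwarz inequality therefore gives $\left| \varsigma^{(k)} \right| = \left| \mathbf{Z}^{(k)} \mathbf{s}_\mu^{(k)} \right| \leq \left\| \mathbf{Z}^{(k)} \right\|_2 \left\| \mathbf{s}_\mu^{(k)} \right\|_2 = \bar\varsigma^{(k)}$. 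Applying the triangle inequality to the sum in~\eqref{eq: Z_time_change} then yields
\[
\left| \zeta_{\set{\mu}}(t_f) \right| = \left| \sum_{k=1}^{\kappa} \varsigma^{(k)} \right| \leq \sum_{k=1}^{\kappa} \left| \varsigma^{(k)} \right| \leq \sum_{k=1}^{\kappa} \bar\varsigma^{(k)} = \left\| \set{\bar\varsigma^{(k)}} \right\|_{\ell^1} = B_3,
\]
which is the claimed bound.

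The second step is to exhibit an admissible sequence attaining $B_3$. For each $k$ with $\bar\varsigma^{(k)} > 0$ I would set $\bar{\mathbf{s}}_\mu^{(k)} = (\mathbf{Z}^{(k)})^T / \bar\varsigma^{(k)}$, which is a unit vector by construction; for any $k$ with $\bar\varsigma^{(k)} = 0$ (equivalently $\mathbf{Z}^{(k)} = 0$) the summand vanishes for every admissible choice, and an arbitrary unit vector may be used. Substituting gives $\varsigma^{(k)} = \mathbf{Z}^{(k)} (\mathbf{Z}^{(k)})^T / \bar\varsigma^{(k)} = \left\| \mathbf{Z}^{(k)} \right\|_2^2 / \left\| \mathbf{Z}^{(k)} \right\|_2 = \bar\varsigma^{(k)} \geq 0$, so that $\zeta_{\set{\mu}}(t_f) = \sum_{k=1}^{\kappa} \bar\varsigma^{(k)} = B_3$ and the bound is met.

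The key observation, and the only place where the argument exceeds a routine application of standard inequalities, is that attaining $B_3$ demands equality in both Cauchy--Schwarz \emph{and} the triangle inequality simultaneously. The former forces each $\bar{\mathbf{s}}_\mu^{(k)}$ to be parallel to $(\mathbf{Z}^{(k)})^T$, while the latter requires the resulting real scalars $\varsigma^{(k)}$ to share a common sign so that no cancellation occurs across time steps. What makes both conditions compatible is precisely the per-step freedom of the time-varying uncertainty model: because the $\mathbf{s}_\mu^{(k)}$ are selected independently, aligning each with $+(\mathbf{Z}^{(k)})^T$ renders every $\varsigma^{(k)}$ non-negative at once. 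This stands in contrast to Theorem~\ref{thm: theorem_1}, where a single constant $\mathbf{s}_\mu$ must serve all steps; there the attainable value is the generally smaller quantity $\left\| \mathbf{\Gamma} \right\|_2 = \left\| \sum_{k=1}^{\kappa} \mathbf{Z}^{(k)} \right\|_2 \leq \sum_{k=1}^{\kappa} \left\| \mathbf{Z}^{(k)} \right\|_2 = B_3$, since the shared direction cannot simultaneously align with every row.
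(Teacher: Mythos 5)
Your proof is correct and follows essentially the same route as the paper's: bound each term via Cauchy--Schwarz, sum via the triangle inequality, and exhibit the aligned sequence $\bar{\mathbf{s}}_\mu^{(k)} = (\mathbf{Z}^{(k)})^T/\bar\varsigma^{(k)}$ that attains equality in both. Your treatment is in fact slightly more careful than the paper's, since you handle the degenerate case $\mathbf{Z}^{(k)}=0$ and make the simultaneous-equality conditions explicit.
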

\begin{proof}
    Suppose $\mathbf{s}_{\mu}^{(k)}$ is normalized for all $k$ with differential sensitivity given by~\eqref{eq: Z_time_change}. Then $\left| \zeta_{\set{\mu}}(t_f) \right| = \left| \sum_{k=1}^{\kappa} \mathbf{Z}^{(k)} \mathbf{s}_{\mu}^{(k)} \right| \leq \sum_{k=1}^{\kappa} \left| \mathbf{Z}^{(k)} \mathbf{s}_{\mu}^{(k)} \right| = \left\| \set{ \varsigma^{(k)} } \right\|_{\ell^1}$. It follows that this sum is maximized if each term $\varsigma^{(k)}$ is of the same sign and takes on the maximum $\bar{\varsigma}^{(k)} = \lvert \max_{\mathbf{s}_{\mu}^{(k)}} \mathbf{Z}^{(k)} \mathbf{s}_{\mu}^{(k)} \rvert$ for each $k \in \set{1,2,\hdots,\kappa}$. 
    Since $\mathbf{s}_{\mu}^{(k)}$ is normalized, we have that $\bar{\varsigma}^{(k)} = \left\| \mathbf{Z}^{(k)} \right\|_{2}$. As such the maximum sensitivity is given by $B_3 := \sum_{k=1}^{\kappa} \left| \bar{\varsigma}^{(k)} \right| = \left\| \set{\bar{\varsigma}^{(k)}} \right\|_{\ell^{1}}$. For the sequence of uncertainty structures $\mathbf{s}_{\mu}^{(k)}$, we have that for each $k$, $\mathbf{Z}^{(k)} \mathbf{s}_{\mu}^{(k)}$ is maximized by $\mathbf{s}_\mu^{(k)} = {\mathbf{Z}^{(k)}}^T/\left\| \mathbf{Z}^{(k)} \right\| = {\mathbf{Z}^{(k)}}^T / \bar{\varsigma}^{(k)}$. The sequence $\set{\bar{\mathbf{s}}_{\mu}^{(k)}} = \set{{\mathbf{Z}^{(k)}}^T/\bar{\varsigma^{(k)}}}$ directly follows. 
\end{proof}

\section{Guaranteed Performance}\label{sec: performance}
We now seek to leverage the sensitivity bound $B_1$ of~\eqref{eq: bound_B_2} to guarantee performance in the face of uncertainty structured as $\hat{H}_\mu$ with strength $\delta$.  For a given controller, characterized by the set of control fields $\{f_{m}^{(k)}\}$ and fixed gate operation time $t_f$, we view the perturbed error $\tilde{e}_\mu (\delta)$ in the direction $\hat{H}_\mu$ as a function of the uncertain parameter $\delta$ where $t_f$ is suppressed in the expression.  We likewise consider the differential sensitivity as a function of $\delta$ so that 
\begin{equation*}
   \zeta_{\mu}(\delta) = \frac{\partial \tilde{e}_\mu (\delta)}{\partial \delta} = \lim_{\Delta \delta \rightarrow 0} \frac{\tilde{e}_\mu(\delta + \Delta \delta) - \tilde{e}_\mu(\delta)}{\Delta \delta}. 
\end{equation*}
Our goal is to determine a bound on $\delta \in [\delta_1, \delta_2]$ such that $\tilde{e}_\mu(\delta) \leq \epsilon$ for the error threshold $\epsilon$. Here $\delta_1$ and $\delta_2$ determine the endpoints of the uncertainty set for the parameter $\delta$ admitted by the physical model.  Before providing the main result, we establish a pair of lemmas. 

\begin{lemma}\label{lemmma_1}
On any interval $[\delta_1,\delta_2] \subset \mathbb{R}$ such that the fidelity $\tilde{\mathcal{F}_{\mu}}(\delta) \neq 0$,
the function $\tilde{e}_\mu(\delta)$ is locally Lipschitz.
\end{lemma}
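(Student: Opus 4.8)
The plan is to exploit the fact that $\tilde{e}_\mu(\delta) = 1 - \tilde{\mathcal{F}}_\mu(\delta)$ differs from the fidelity only by a constant, so it suffices to show that $\delta \mapsto \tilde{\mathcal{F}}_\mu(\delta) = \tfrac{1}{N}|g(\delta)|$ is locally Lipschitz, where $g(\delta) := \Tr[U_f^\dagger \tilde{\Phi}^{(\kappa,0)}(\delta)]$. First I would establish that $g$ is smooth (indeed real-analytic) in $\delta$. Each interval Hamiltonian $\tilde{H}^{(k)}(\delta) = H^{(k)} + \delta\,\alpha_\mu^{(k)}\hat{H}_\mu$ is affine in $\delta$, the matrix exponential $A \mapsto e^A$ is entire, so each propagator $\tilde{\Phi}^{(k,k-1)}(\delta) = \exp[-i\tilde{H}^{(k)}(\delta)\Delta]$ depends real-analytically on $\delta$. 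Since $\tilde{\Phi}^{(\kappa,0)}(\delta)$ is a finite ordered product of these propagators and the trace against $U_f^\dagger$ is linear, $g:\mathbb{R}\to\mathbb{C}$ is real-analytic, and in particular continuously differentiable.

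Second, I would handle the modulus. Writing $g = \Re g + i\,\Im g$ with $\Re g,\Im g$ smooth, the map $z \mapsto |z| = \sqrt{(\Re z)^2 + (\Im z)^2}$ is $C^\infty$ on $\mathbb{C}\setminus\{0\}$. The hypothesis $\tilde{\mathcal{F}}_\mu(\delta)\neq 0$ on $[\delta_1,\delta_2]$ is exactly the statement that $g(\delta)\neq 0$ there, so the composition $\delta \mapsto |g(\delta)|$ avoids the single point of non-smoothness of the modulus and is therefore $C^1$ on the interval. A $C^1$ function of a real variable has a continuous, hence locally bounded, derivative, so by the mean value theorem it is Lipschitz on every compact subinterval, i.e.\ locally Lipschitz; the constant shift and the factor $1/N$ preserve this, yielding the claim for $\tilde{e}_\mu$.

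The only delicate point is the non-differentiability of $z\mapsto|z|$ at the origin, which is precisely what the non-vanishing hypothesis is there to rule out, so I would flag that this is where the assumption enters. I would also note that local Lipschitz continuity in fact survives even at zeros of $g$: since $g$ is smooth and hence locally Lipschitz, and the modulus is globally $1$-Lipschitz on $\mathbb{C}$, the reverse triangle inequality $\big||g(\delta)| - |g(\delta')|\big| \le |g(\delta) - g(\delta')|$ transfers the local Lipschitz bound from $g$ to $|g|$ with no assumption on the zeros at all. What the hypothesis $\tilde{\mathcal{F}}_\mu(\delta)\neq 0$ genuinely buys is the stronger, differentiable ($C^1$) regularity of $\tilde{e}_\mu$, which is what the subsequent integration of $\zeta_\mu(\delta)$ over $[\delta_1,\delta_2]$ in the performance argument will rely on; I would state the lemma with this downstream use in mind so that it dovetails cleanly with the guaranteed-performance result that follows.
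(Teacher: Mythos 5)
Your proof is correct and follows essentially the same route as the paper's: establish real analyticity of $g(\delta) = \Tr\bigl[U_f^\dagger\tilde{\Phi}^{(\kappa,0)}(\delta)\bigr]$ (the paper does this by verifying complex analyticity of each propagator via an explicit integral formula for its derivative, whereas you invoke entirety of the matrix exponential composed with an affine map --- both are fine), and then use the non-vanishing hypothesis to keep the modulus away from its single point of non-smoothness. Your closing observation via the reverse triangle inequality --- that $|g|$ inherits local Lipschitz continuity from $g$ with no assumption on zeros at all, so the hypothesis $\tilde{\mathcal{F}}_\mu(\delta)\neq 0$ only buys the stronger $C^1$/analytic regularity --- is correct and is a genuine sharpening that the paper's own proof does not note.
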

\begin{proof}
To establish that $\tilde{e}_{\mu}(\delta)$ is locally Lipschitz it suffices to show that $\tilde{e}_\mu (\delta)$ is real analytic in $\delta$. We rewrite $-i \tilde{H}^{(k)}$ as $-i \left( A^{(k)} + \delta B^{(k)} \right)$ with
\begin{align*}
A^{(k)} = H_0 + \sum\limits_{m = 1}^{M}H_{m}f_{m}^{(k)}, \quad 
B^{(k)} = \hat{H}_\mu \alpha_{\mu}^{(k)}.
\end{align*}
Allowing a non-vanishing, complex perturbation $\eta = x+iy$ and considering deviations $\Delta x$ and $\Delta y$ we have $F(\eta):=\exp\left[-i(t_k-t_{k-1}) (A^{(k)} + \eta B^{(k)})\right]$ is complex analytic as 
\begin{multline*}
\left.\frac{\partial F(\eta)}{\partial x}\right|_{\eta \neq 0} 
= \left.\frac{\partial F(\eta)}{i\partial y}\right|_{\eta \neq 0} 
 = \\ -i\int_{t_{k-1}}^{t_{k}} \!\! e^{-i(t_k-\tau)(A^{(k)} + \eta B^{(k)})} B^{(k)} e^{-i (\tau-t_{k-1})(A^{(k)} + \eta B^{(k)})}d \tau.
\end{multline*}
Then $F(\eta)$ restricted to $\delta=\Re\{\eta\}$ is real analytic and has a convergent power series in $\delta \in [\delta_1,\delta_2]$. 
Given the product of real analytic functions is real analytic and, by the Fa\`a di Bruno formula, the composition of real analytic functions is real analytic~\cite{krantz2002primer}, $\Tr\left[ U_f^{\dagger} \tilde{\Phi}^{(\kappa,0)}(\delta) \right] =: g(\delta) \in \mathbb{C}$ is real analytic. Employing the same argument for $|g(\delta)| = \sqrt{g(\delta)g^*(\delta)}$, we have that $\left|\Tr\left[ U_f^{\dagger} \tilde{\Phi}^{(\kappa,0)}(\delta) \right]\right|$ is real analytic except for when $g(\delta) = 0$, where $\tilde{\mathcal{F}}_\mu(\delta) = 0$. It then follows that $\tilde{e}_\mu(\delta) = 1-\tilde{\mathcal{F}}_\mu(\delta)$ is locally Lipschitz on an arbitrary interval $[\delta_1,\delta_2]$ if $\tilde{\mathcal{F}}_\mu(\delta) \neq 0$.  
\end{proof}

\begin{lemma}\label{lemma_2}
On the interval $[\delta_1,\delta_2]$ and given that $\tilde{H}_k$ remains Hermitian, $\left| \zeta_{\mu}(\delta) \right|$ is uniformly bounded with Lipschitz constant $B_1$ established by~\eqref{eq: bound_B_2}. 
\end{lemma}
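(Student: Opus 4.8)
The plan is to recognize that the chain~\eqref{eq: bound_B_1}--\eqref{eq: bound_B_2} producing $B_1$ used nothing specific about $\delta=0$: it relied only on the unitarity of the flanking propagators, the unit operator norm of the matrix exponentials, the unit singular values of $U_f^\dagger$, and $|e^{-i\phi}|=1$. All of these persist at every $\delta\in[\delta_1,\delta_2]$ precisely because $\tilde H^{(k)}$ stays Hermitian, and the resulting constant $B_1=\Delta\|\hat H_\mu\|_2\sum_k|\alpha_\mu^{(k)}|$ contains no $\delta$. So the strategy is simply to re-run that argument at an arbitrary $\delta$ with tildes on the propagators, verify the estimate is unchanged, and then convert the uniform derivative bound into a Lipschitz constant via Lemma~\ref{lemmma_1}.

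First I would write the differential sensitivity at a general $\delta$. Differentiating $\tilde e_\mu(\delta)=1-\tfrac1N|g(\delta)|$ with $g(\delta)=\Tr[U_f^\dagger\tilde\Phi^{(\kappa,0)}(\delta)]$ at any point where $\tilde{\mathcal F}_\mu(\delta)\neq0$ (so that, by Lemma~\ref{lemmma_1}, $|g|$ is differentiable and $\tilde\phi:=\angle g(\delta)$ is well defined) yields the verbatim analogue of~\eqref{eq: reduced_sensitivity_matrix},
\begin{equation*}
\zeta_\mu(\delta)=\Re\Bigl\{-\frac{e^{-i\tilde\phi}}{N}\Tr\Bigl[U_f^\dagger\sum_{k=1}^{\kappa}\tilde\Lambda^{(\kappa,k)}\Bigr]\Bigr\},\qquad \tilde\Lambda^{(\kappa,k)}=\tilde\Phi^{(\kappa,k)}\,\frac{\partial\tilde\Phi^{(k,k-1)}}{\partial\delta}\,\tilde\Phi^{(k-1,0)},
\end{equation*}
the only change from the $\delta=0$ case being that every propagator now carries a tilde. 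The derivative $\partial_\delta\tilde\Phi^{(k,k-1)}$ is exactly the integral~\eqref{eq: dphi_tilde} with $H^{(k)}$ replaced by $\tilde H^{(k)}=A^{(k)}+\delta B^{(k)}$, which is the same expression already computed (for $\eta\neq0$) in the proof of Lemma~\ref{lemmma_1}.

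Next I would invoke Hermiticity. Since $\tilde H^{(k)}$ is Hermitian for real $\delta$, each $\tilde\Phi^{(k,k-1)}=\exp[-i\tilde H^{(k)}\Delta]$ is unitary and every exponential inside the integral has operator norm $1$. The bounding chain then reproduces, step for step, $|\zeta_\mu(\delta)|\le\frac1N\sum_{k,\ell}\sigma_\ell(\tilde\Lambda^{(\kappa,k)})$ via the von Neumann trace inequality and $|e^{-i\tilde\phi}|=1$; unitary invariance of singular values under $\tilde\Phi^{(\kappa,k)}$ and $\tilde\Phi^{(k-1,0)}$ together with the integral estimate $\|\partial_\delta\tilde\Phi^{(k,k-1)}\|_2\le\Delta|\alpha_\mu^{(k)}|\,\|\hat H_\mu\|_2$ then collapses the bound to $B_1$. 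Because $B_1$ is independent of $\delta$, this is a uniform bound on the whole interval of differentiability.

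Finally, I would combine the uniform derivative bound with Lemma~\ref{lemmma_1}. Since $\tilde e_\mu$ is continuous on $[\delta_1,\delta_2]$ and differentiable there with $|\zeta_\mu(\delta)|\le B_1$ off the (isolated) zeros of $g$, an interval-by-interval mean value argument gives $|\tilde e_\mu(\delta)-\tilde e_\mu(\delta')|\le B_1|\delta-\delta'|$, identifying $B_1$ as a global Lipschitz constant. I expect the only genuinely delicate point to be the differentiability of $|g(\delta)|$ at zeros of $g$: these are exactly the points excluded by the hypothesis $\tilde{\mathcal F}_\mu(\delta)\neq0$ inherited from Lemma~\ref{lemmma_1}, and the Lipschitz estimate extends across such isolated points by the continuity of $\tilde e_\mu$, so they do not weaken the uniform bound.
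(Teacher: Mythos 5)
Your proposal is correct and takes essentially the same route as the paper: both observe that the derivation of $B_1$ at $\delta=0$ used only the unitarity of the propagators and the unit norm of the flanking exponentials, all of which persist at arbitrary $\delta\in[\delta_1,\delta_2]$ because $\tilde H^{(k)}(\delta)$ remains Hermitian, so the $\delta$-independent constant $B_1$ bounds $|\zeta_\mu(\delta)|$ uniformly. Your closing mean-value step converting the uniform derivative bound into the Lipschitz estimate is bookkeeping the paper defers to the proof of Theorem~\ref{thm: theorem_3}, but the substance is identical.
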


\begin{proof}
Given that $B_1$ is an upper bound at $\delta = 0$, it suffices to show that if $B_1$ is independent of $\delta$, the bound holds on an arbitrary interval $[ \delta_1, \delta_2 ]$. To establish the independence, consider evaluation of $\zeta_{\mu}(\delta)$ at some $\delta_0 \neq 0$.  Then we have from~\eqref{eq: dphi_tilde}
\begin{multline}\label{eq: perturbed_dphi}
\left.\frac{ \partial \tilde{\Phi}^{(k,k-1)}}{\partial \delta} \right|_{\delta = \delta_0} 
= \int_{t_{k-1}}^{t_{k}}  \!\! e^{-i\tilde{H}^{(k)}(\delta_0)(t_k-\tau)} \times \\ 
\left(-i \alpha_{\mu}^{(k)} \hat{H}_\mu \right) e^{-i \tilde{H}^{(k)}(\delta_0) (\tau-t_{k-1})}d \tau 
\end{multline}
where the perturbed Hamiltonians are
\begin{equation}\label{eq: perturbed_H_k}
\tilde{H}^{(k)}(\delta_0)  
    = H_0 + \sum_{m=1}^M H_m f_{m}^{(k)} + \alpha_{\mu}^{(k)} \hat{H}_{\mu}\delta_0.
\end{equation}
Given that $\hat{H}_\mu$ is still Hermitian, it follows that the terms $ e^{-i\tilde{H}^{(k)}(\delta_0)(t_k-\tau)}$ and $e^{-i\tilde{H}^{(k)}(\delta_0) (\tau-t_{k-1})}$ in~\eqref{eq: perturbed_dphi} are still unitary, and the bound in~\eqref{eq: bound_B_2} remains unchanged. As such we have $\left| \zeta_{\mu}(\delta) \right| \leq B_1$ on $[\delta_1,\delta_2]$.   
\end{proof}

We now state the main result and provide a bound on $\delta$ to guarantee a given performance requirement. 
\begin{theorem}\label{thm: theorem_3}
Given a maximum allowable error $\epsilon$ and an uncertainty structure $\hat{H}_\mu$, the perturbed fidelity error $\tilde{e}_\mu(\delta)$ is less than $\epsilon$ for all $|\delta|<\bar{\delta}=\frac{\epsilon - e(0)}{B_1}$.  
\end{theorem}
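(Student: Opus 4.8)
The plan is to promote the pointwise sensitivity bound of Lemma~\ref{lemma_2} to a statement about the error itself, using the regularity established in Lemma~\ref{lemmma_1}, and then read off the admissible range of $\delta$. Since Lemma~\ref{lemmma_1} shows $\tilde{e}_\mu(\delta)$ is locally Lipschitz wherever $\tilde{\mathcal{F}}_\mu(\delta)\neq 0$, it is absolutely continuous on any such subinterval containing the origin, and the fundamental theorem of calculus gives
\[
  \tilde{e}_\mu(\delta) = e(0) + \int_0^\delta \zeta_\mu(s)\,ds,
\]
where $e(0)=\tilde{e}_\mu(0)$ is the nominal error. Applying $|\zeta_\mu(s)| \leq B_1$ from Lemma~\ref{lemma_2} yields the affine envelope $\tilde{e}_\mu(\delta) \leq e(0) + B_1|\delta|$.

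With this envelope in hand, the conclusion is immediate on the region where it is valid: substituting the threshold $|\delta| < \bar{\delta} = (\epsilon - e(0))/B_1$ gives $B_1|\delta| < \epsilon - e(0)$, hence $\tilde{e}_\mu(\delta) < e(0) + (\epsilon - e(0)) = \epsilon$. For $\bar\delta>0$ to be meaningful I work in the natural regime $e(0) < \epsilon < 1$.

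The hard part is that the regularity in Lemma~\ref{lemmma_1}, and with it the integral representation above, is only guaranteed away from the zeros of $\tilde{\mathcal{F}}_\mu$; a priori the fidelity could vanish somewhere inside $[0,\bar\delta)$ and break the estimate across the full interval. I would close this gap with a first-crossing (continuity) argument, exploiting that $\tilde{e}_\mu \leq \epsilon < 1$ forces $\tilde{\mathcal{F}}_\mu = 1-\tilde{e}_\mu > 1-\epsilon > 0$. Suppose, for contradiction, that the error attains the value $\epsilon$ somewhere in $[0,\bar\delta)$; by continuity of $\tilde{e}_\mu$ there is a least such point $\delta^\ast$. On $[0,\delta^\ast]$ the error stays at most $\epsilon<1$, so the fidelity stays positive, the integral bound of the first paragraph applies throughout, and
\[
  \epsilon = \tilde{e}_\mu(\delta^\ast) \leq e(0) + B_1\delta^\ast < e(0) + B_1\bar\delta = \epsilon,
\]
a contradiction. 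The identical reasoning on $(-\bar\delta,0]$ handles $\delta<0$, establishing $\tilde{e}_\mu(\delta) < \epsilon$ throughout $|\delta| < \bar\delta$.
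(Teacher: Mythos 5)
Your proposal is correct and follows essentially the same route as the paper: combine the regularity of Lemma~\ref{lemmma_1} with the uniform derivative bound $|\zeta_\mu|\leq B_1$ of Lemma~\ref{lemma_2} to obtain the affine envelope $\tilde{e}_\mu(\delta)\leq e(0)+B_1|\delta|$, then solve for the admissible $|\delta|$. The paper phrases the middle step as ``$B_1$ uniformly bounds the local Lipschitz constants, hence $\tilde{e}_\mu$ is globally Lipschitz with constant $B_1$,'' whereas you pass through absolute continuity and the fundamental theorem of calculus; these are interchangeable. The one substantive difference is your first-crossing argument handling the possible zeros of $\tilde{\mathcal{F}}_\mu$. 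The paper's proof silently assumes the regularity of Lemma~\ref{lemmma_1} holds across all of $[\delta_1,\delta_2]$, even though that lemma only guarantees it where the fidelity does not vanish; your observation that $\tilde{e}_\mu\leq\epsilon<1$ forces $\tilde{\mathcal{F}}_\mu\geq 1-\epsilon>0$ on the relevant interval, together with the least-crossing contradiction, closes this loophole cleanly. So your write-up is not merely equivalent but slightly more rigorous than the published argument on this point; the only caveat is the (reasonable, and implicitly assumed by the theorem statement) restriction to the regime $e(0)<\epsilon<1$ so that $\bar{\delta}>0$.
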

\begin{proof}
From Lemma~\ref{lemmma_1}, $\tilde{e}_\mu(\delta)$ is locally Lipschitz on $[\delta_1,\delta_2]$. It follows that $\left| \tilde{e}_\mu(\delta_a) - \tilde{e}_{\mu}(\delta_b) \right| \leq L_{ab} \left| \delta_a - \delta_b \right|$ on each open interval $(\delta_a,\delta_b) \subset [\delta_1,\delta_2]$ with Lipschitz constant $L_{ab}$. From Lemma~\ref{lemma_2}, the bound $B_1$ for a given $\hat{H}_\mu$ for a given controller provides a uniform upper bound on $L_{ab}$ over the interval $[\delta_1,\delta_2]$ so that $\tilde{e}_\mu(\delta)$ is Lipschitz on $[\delta_1,\delta_2]$ with Lipschitz constant $B_1$. Then, since $\delta = 0$ (the ``nominal" uncertainty) necessarily lies in the interval $[\delta_1,\delta_2]$ of allowable uncertainty, $\left| \tilde{e}_\mu(\delta) - e(0) \right| \leq B_1 \left| \delta - 0\right|$. Since the fidelity error is always non-negative, the perturbed error is bounded by $\tilde{e}_\mu(\delta) \leq e(0) + B_1 |\delta|$ where $e(0)$ is the nominal error. We now have the performance condition $\tilde{e}_\mu(\delta) \leq e(0) + B_1 |\delta| \leq \epsilon $ from which the bound $| \delta | \leq \frac{\epsilon - e(0)}{B_1}$ guarantees $ \tilde{e}_\mu(\delta) \leq \epsilon$.
\end{proof}

Although the bound $\bar{\delta}=\frac{\epsilon - e(0)}{B_1}$ guarantees that $\tilde{e}_\mu (\delta)$ does not exceed the threshold $\epsilon$, this yields a highly conservative performance bound on $\delta$. However, we can apply an iterative process based on the local bounds of Theorem~\ref{thm: theorem_2} to compute a less conservative worst-case perturbation. Quantizing the uncertainty size $\delta$ into uniform steps of a given magnitude $\mathbf{d}$, we compute the directions of maximum sensitivity of the fidelity error over every time interval $k$ at increasing perturbation strength $n\mathbf{d} = \delta_n$.  These maximum sensitivity directions from Theorem~\ref{thm: theorem_2} take the form $\mathbf{s}_{\mu}^{(k)}(n) = \left[s_0^{(k)}(n),s_1^{(k)}(n),\hdots,s_{M}^{(k)}(n)\right]^T$ generated by the Hamiltonian $\tilde{H}^{(k)}(\delta_n)$ at perturbation strength $\delta_n$. Initializing the procedure with the worst-case perturbed Hamiltonian resulting from the sequence $\set{s_\mu^{(k)}(0)} = \set{\mathbf{\bar{s}_\mu}^{(k)}}$ of Theorem~\ref{thm: theorem_2} we have: 
\begin{algorithm}
\begin{algorithmic}[1]\label{algorithm}
\caption{Compute largest $\bar{\delta}$ such that $\tilde{e}_\mu(\bar{\delta}) < \epsilon$}
\State{$n = 1$}
\State{Initialize Worst-Case Perturbed Hamiltonian:\\
\mbox{$\tilde{H}^{(k)}(\delta_{1}) = H_0 + \sum\limits_{m=1}^M H_m f_{m}^{(k)} + \mathbf{d} \sum\limits_{m = 0}^{M} \alpha_{m}^{(k)} \hat{H}_{m} s_m^{(k)}(0)$}}
\State{Evaluate $\tilde{e}_\mu(\delta_1)$}
\While{$\epsilon - \tilde{e}_\mu(\delta_n)>0$}
\State{$n = n+1$, $\delta_n = n \mathbf{d}$}
\State{Compute $s_\mu^{(k)}(n-1)$}
\State{$\tilde{H}^{(k)}(\delta_n) = \tilde{H}^{(k)}(\delta_{n-1}) + \mathbf{d} \sum\limits_{m=0}^{M} \alpha_{m}^{(k)} \hat{H}_m s_m^{(k)}(n-1)$}
\State{Evaluate $\tilde{e}_\mu(\delta_n)$}
\EndWhile
\State{Set $\bar{n} = n - 1$, $\bar{
\delta} = \bar{n}\mathbf{d}$}
\end{algorithmic}
\end{algorithm}

\section{Case Study: Gate Optimization}\label{sec: example}

\begin{figure} \centering
\includegraphics[width = 1\columnwidth]{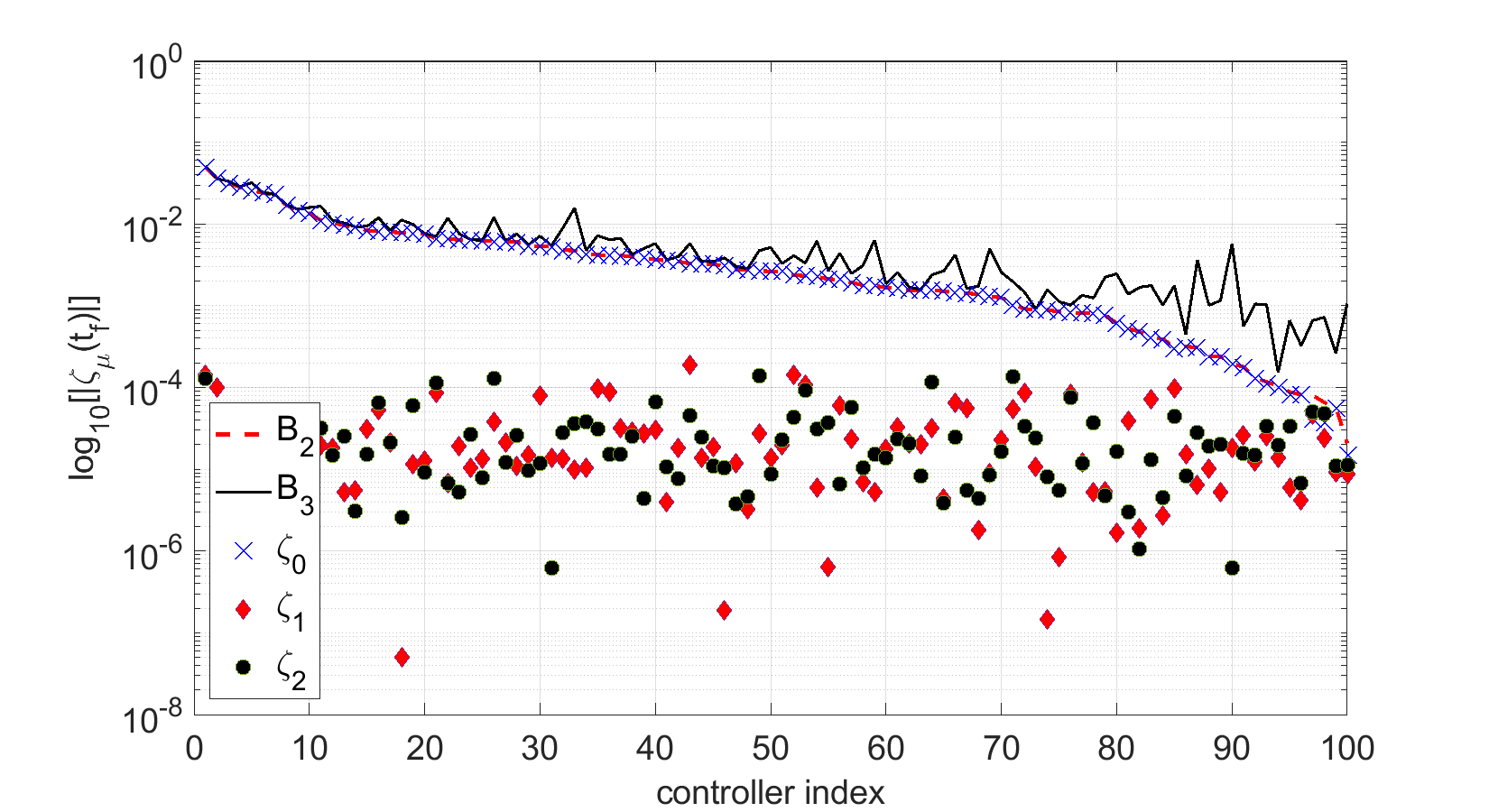}
\caption{Semilog plot of $ \left| \zeta_\mu(t_f) \right|$ versus controller index. The solid line indicates the upper bound $B_3$ for a variable uncertainty structure. The dashed line indicates the upper bound $B_2$ for a static structure. Markers of different shapes/colors indicate $\left| \zeta_\mu(t_f) \right|$ for the uncertainty structures $\hat{H}_\mu$ for $\mu \in \set{0,1,2}$. Note that $|\zeta_\mu(t_f)| \leq B_2 \leq B_3$ for all controllers.}  
\label{fig: tight_bounds}
\end{figure}

To illustrate these results, we consider dynamic controllers optimized for maximum gate fidelity in a three-spin chain with Heisenberg coupling~\cite{Floether_2012}. As opposed to full spin addressability, we consider the case of control applied only to the initial spin of the chain, a more challenging optimization problem.  The drift and interaction Hamiltonian matrices are:
\begin{align}
\begin{split} \label{eq: case_study_hamiltonians}
&H_0 = \frac{1}{2} \sum_{\ell = 1}^{2} \left( \sigma_{x}^{(\ell)}\sigma_{x}^{(\ell +1)} + \sigma_{y}^{(\ell)} \sigma_{y}^{(\ell+1)} + \sigma_{z}^{(\ell)}\sigma_{z}^{(\ell+1)} \right), \\
&H_{1} = \frac{1}{2} \sigma_{x}^{1}, \quad H_{2} = \frac{1}{2} \sigma_{y}^{1},
\end{split}
\end{align}
where $\sigma_{\set{x,y,z}}$ are the Pauli spin operators. Here $\sigma_{\set{x,y,x}}^{(\ell)}$ is the $3$-fold tensor product with $\sigma_{\set{x,y,z}}$ in the $\ell$th position and $I_2$ in the others. The target gate $U_f$ is a randomly generated unitary gate, and the initial gate is taken as the identify matrix $I_8$. The choice of a randomly-generated unitary gate as the target is used to increase the difficulty of the optimization. The gate operation time is $t_f = 15$ with $\kappa = 32$ time steps. 

\begin{figure}
\includegraphics[width=1\columnwidth]{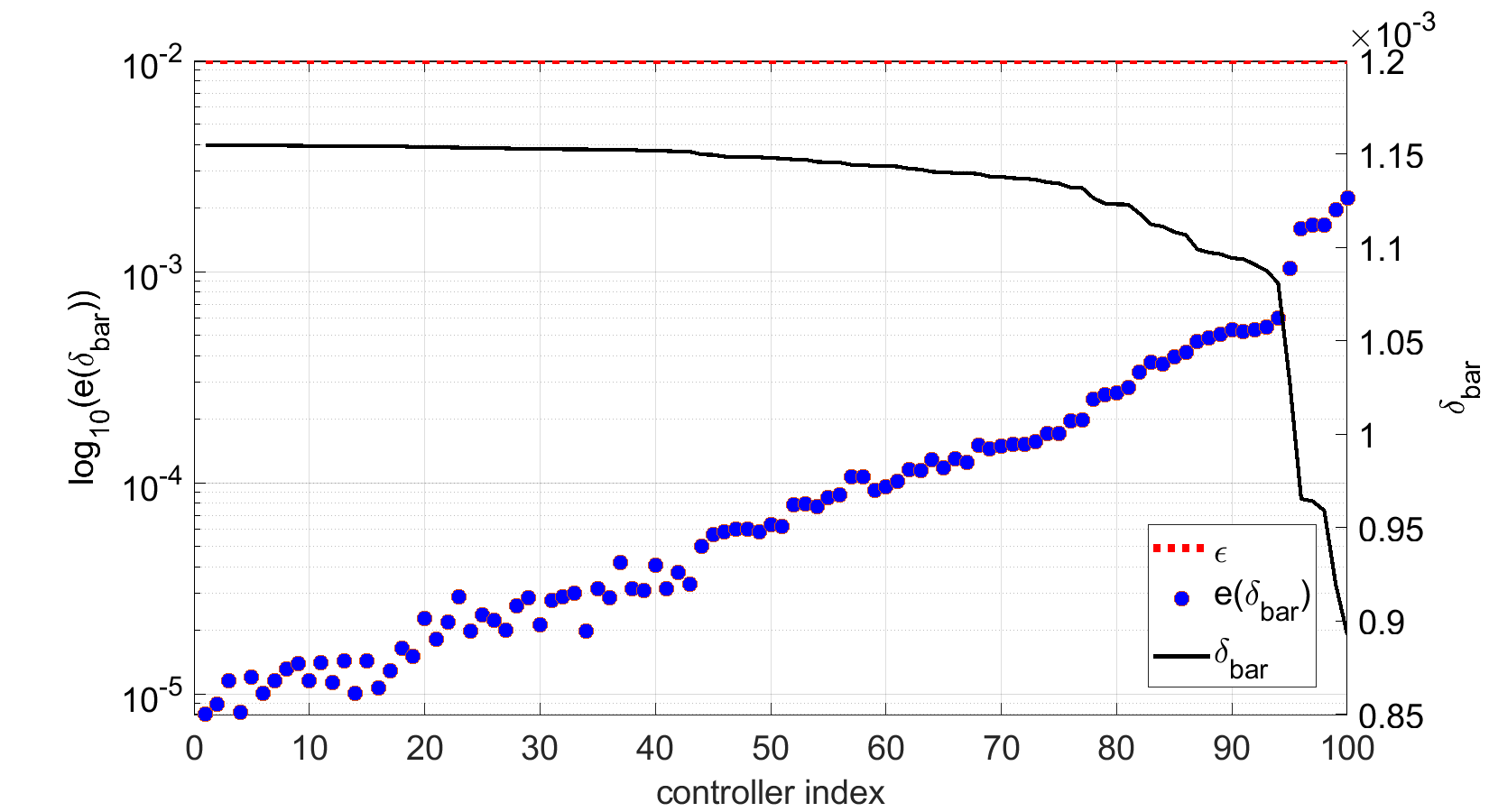}
\caption{
Semilog plot of $\tilde{e}_{0}(\bar{\delta})$ and $\bar{\delta}$ versus controller index. The solid line is $\bar{\delta}$ computed from Theorem~\ref{thm: theorem_3}. The dotted line is the performance threshold $\epsilon=0.01$. The markers indicate $\tilde{e}_0(\bar{\delta)}$, showing that for $\delta < \bar{\delta}$, $\tilde{e}_0(\delta)$ does not exceed $\epsilon$.}  
\label{fig: performance_loose}
\end{figure}

\begin{figure}
\includegraphics[width = 1\columnwidth]{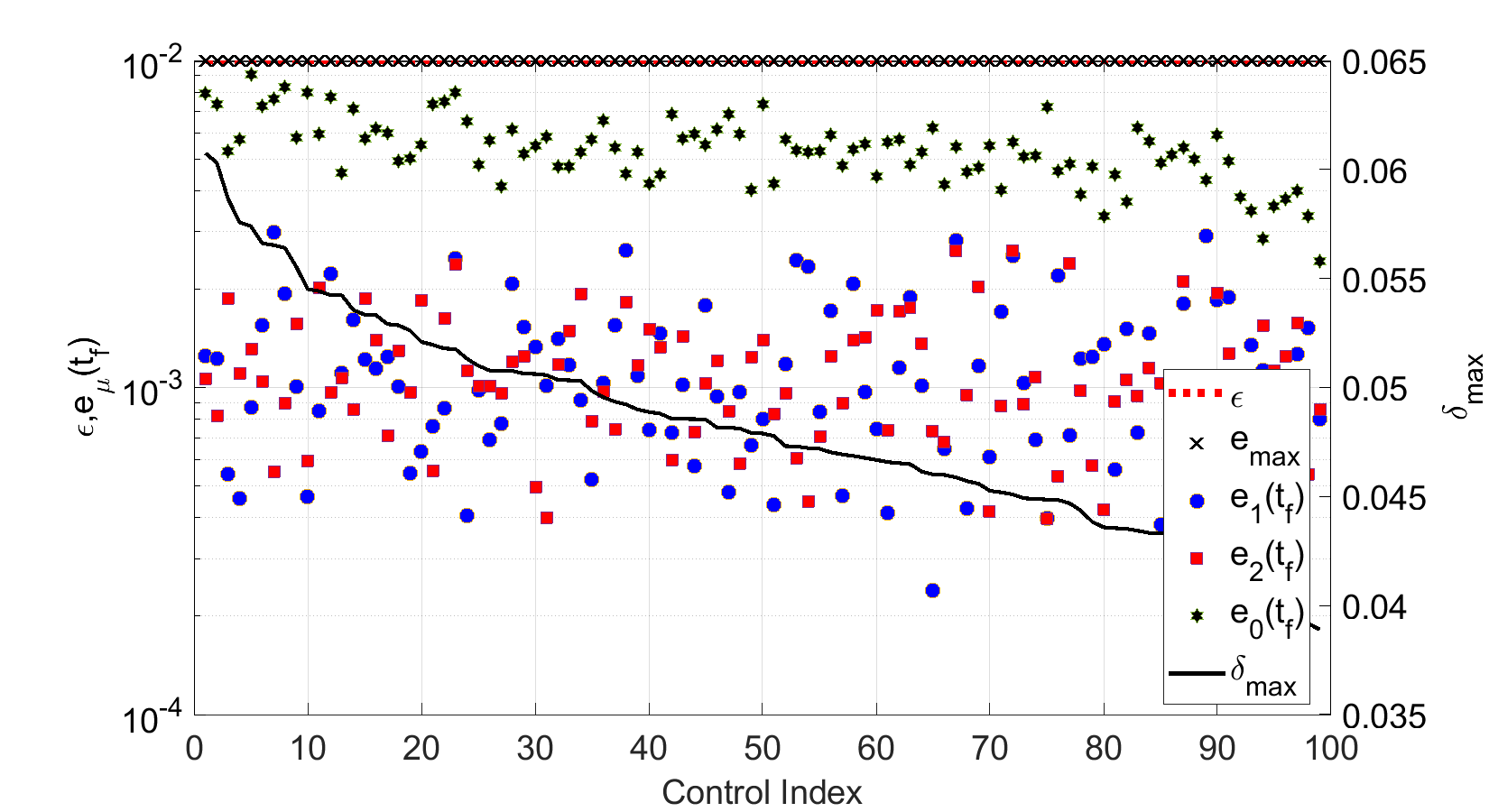}
\caption{
Semilog plot of $\tilde{e}_{\mu}(\bar{\delta})$ and $\bar{\delta}$ versus controller index. The solid line is $\bar{\delta}$ computed by iterating on $B_3$ from Theorem~\ref{thm: theorem_2}. The dotted line is the performance threshold $\epsilon = 0.01$. The x marker indicates $e(\bar{\delta})$ for the worst-case structure. The other markers indicate $e_\mu(\bar{\delta})$ for a perturbation $\bar{\delta}$ for $\mu \in \set{0,1,2}$. Note that no perturbation of size $\bar{\delta}$ in any principal direction $\hat{H}_\mu$ yields an error that exceeds $\epsilon$.} 
\label{fig: performance_tight}
\end{figure}

As discussed in Section~\ref{sec: sensitivity_bounds} we choose as an uncertainty structure basis $\{\hat{H}_\mu \}_{\mu=0}^2$. Each element  $\hat{H}_\mu$ is the normalized version of the matrices in~\eqref{eq: case_study_hamiltonians} as per Section~\ref{sec: uncertainty_model}.  We first examine the bound on all perturbations at $\delta = 0$. Figure~\ref{fig: tight_bounds} shows the tightness of the bounds $B_2$ and $B_3$ obtained from Theorems~\ref{thm: theorem_1} and~\ref{thm: theorem_2}. Note that the sensitivity resulting from the uncertainty structure $\hat{H}_0$ nearly matches $B_2$. As expected, the bound $B_3$ for a non-static uncertainty structure is slightly larger than $B_2$. For performance, we examine the closeness of the performance guarantees of Theorem~\ref{thm: theorem_3} to the actual perturbed error based on the predicted worst-case $\bar{\delta}$.  We set the value of $\epsilon$ at $0.01$, so that a gate fidelity of $99\%$ is the minimum acceptable performance.  As shown in Figure~\ref{fig: performance_loose}, perturbing the system in the direction $\hat{H}_0$ up to $\delta$ equating its strict inequality bound $\bar{\delta}$ given by Theorem~\ref{thm: theorem_3} does not violate the performance criterion $\tilde{e}_{0}(\bar{\delta}) < \epsilon$.  However, the plot also reveals the conservativeness of $\bar{\delta}$ calculated this way. Specifically, the predicted value of $\bar{\delta}$ results in a fidelity error $e_{0}(\bar{\delta})$ orders of magnitude below the minimum performance threshold for most controllers. In only six cases does $\tilde{e}_0(\bar{\delta})$ approach within an order of magnitude of the maximum allowable error $\epsilon$, providing some utility to the first order, differential-based performance bound.  Still, the result is indicative of the limitation of differential sensitivity techniques to guarantee performance for non-vanishing perturbations.  However, as noted in Section~\ref{sec: performance}, we can use Theorem~\ref{thm: theorem_2} to iterate a computational search for the minimum value of $\delta$ such that $\tilde{e}_\mu(\delta)$ exceeds $\epsilon$ for an arbitrary structure $\hat{H}_\mu$. Figure~\ref{fig: performance_tight} shows the results of iterating on $\delta$ to calculate this $\bar{\delta}$ as per Algorithm 1.  Values of $\bar{\delta}$ average two magnitudes greater than those predicted by Theorem~\ref{thm: theorem_3}.  As seen in the figure, the iterative procedure results in a $\bar{\delta}$ such that the sequence of perturbations pushes the fidelity error to the limit of the performance criteria $\epsilon$ indicated by $\tilde{e}_{max}$. Additionally, we note that in the perturbative regime around $\delta = 0$, perturbations structured as $\hat{H}_0$ show the greatest sensitivity. As seen in Figure~\ref{fig: performance_tight}, perturbations of this same structure lead to those values of $\tilde{e}_{\mu}(\delta)$ that most closely approach the $\epsilon$-threshold. This suggests that the differential sensitivity properties at $\delta=0$ are indicative of sensitivity to the error at non-vanishing values of $\delta$, at least for this controller set. 

\section{Conclusion}\label{sec: conclusion}
We have extended the calculation of the differential sensitivity in~\cite{Sean_sensitivity} to the case of optimal piecewise constant controls, shown that the differential sensitivity can be reliably bounded for small perturbations about the nominal operating point, and determined those uncertainty structures that induce the maximal sensitivity. A limitation of the approach is that differential techniques are applicable for small perturbations only. Performance guarantees for larger perturbations and comparison with statistical robustness measures requires further work.

\bibliographystyle{ieeetr}
\bibliography{bibliography}

\end{document}